\renewcommand{\P}[2]{\mathbb{P}_{#1}\left(#2\right)}
\newcommand{\Pt}[1]{\widetilde{\mathbb{P}}\left(#1\right)}
\newcommand{\I}[1]{\mathbb{I}\left(#1\right)}
\newcommand{\E}[1]{\mathbb{E}\left[#1\right]}
\newcommand{\U}[2]{u_{#2}^{#1}}
\newcommand{\p}[2]{p_{#2}^{#1}}
\newcommand{\q}[2]{q_{#2}^{#1}}
\newcommand{\x}[2]{X_{#2}^{#1}}
\newcommand{\tm}[1]{\textrm{#1}}
\begin{document}

\title{\Large Prisoner's Dilemma on Graphs with Large Girth}

\author{Vahideh H. Manshadi
    \thanks{ Department of Electrical Engineering, Stanford University, Stanford, CA 94305.
    Email:\protect\url{vahidehh@stanford.edu}.}
\and
Amin Saberi\footnotemark[2]%
    \thanks{ Department of Management Science and Engineering, Stanford University, Stanford, CA 94305. Email:\protect\url{saberi@stanford.edu}.}
}

\date{}

\maketitle

\pagenumbering{arabic}
\setcounter{page}{1}

\begin{abstract}

We study the evolution of cooperation in populations where individuals play prisoner's dilemma on a network. Every node of the network corresponds on an individual choosing whether to cooperate or defect in a repeated game. The players revise their actions by imitating those neighbors who have higher payoffs.

We show that when the interactions take place on graphs with large girth, cooperation is more likely to emerge. On the flip side, in graphs with many cycles of length $3$ and $4$, defection spreads more rapidly.

One of the key ideas of our analysis is that our dynamics can be seen as a perturbation of the voter model. We write the transition kernel of the corresponding Markov chain in terms of the pairwise correlations in the voter model. We analyze the pairwise correlation and show that in graphs with relatively large girth, cooperators cluster and help each other.

\end{abstract}


\section{Introduction}
\label{sec:intro}
Prisoner's dilemma has become a known benchmark for studying the emergence of cooperation
in populations consisting of selfish agents. In this symmetric $2$-person game, each player has
two strategies, cooperate (C) or defect (D). A cooperator pays a cost of
$c$, and it provides the benefit of $b$ to the opponent. A defector incurs no cost and contributes no
benefit. If the game is only played once, basic analysis of the static game shows that
the unique Nash equilibrium is the defect-defect strategy.

Despite this prediction, the evolution of cooperation has been observed in
populations such as genomes, multicellular organisms, and human society. Such an abundance of cooperation in settings similar to prisoner's dilemma  has motivated
an extensive literature in game theory and evolutionary biology to explain the emergence of cooperation.  For example, for the two player repeated game,
the folk theorem implies that if the players are patient enough the cooperate-cooperate outcome is an equilibrium path of the
infinite horizon game. This is also extended to multi-player games, games of incomplete information, and noisy repeated games \cite{ellison, fundenberg3, fundenberg2}. This question is also studied in evolutionary game theory and evolutionary biology \cite{nowak,nowak_lieberman,fundenberg,GamesOnGraphs}.

In this paper, we study the evolution of cooperation in populations where each agent only interacts
with a small part of the population. In particular, individuals play
prisoner's dilemma on a network. Every node of the network corresponds on an individual
choosing whether to cooperate or defect in a repeated game.  The payoff of a node increases
with the number of its cooperator neighbors.  The network structure is also crucial for the evolution of play:
agents revise their actions by imitating those neighbors who have higher payoffs.  This is similar to the class of {\em Imitation
of Success} dynamics studied in evolutionary game theory (see e.g. Sandholm \cite{sandholm}).

We give the first rigorous analysis proving that, defined properly, {\em locality of interactions increases the likelihood of
the emergence of cooperation}. In particular, if the underling network {\em does not have any short cycles}, the expected number
of cooperators eventually exceeds its initial value. At the same time, we discover graphs with many cycles of
length $3$ or $4$, in which cooperation tends to decrease because of a ``free-riding'' effect.  In these graphs, the set of cooperators is always
surrounded by a set of defectors. Since defectors pay no cost, when they are well-connected
to the set of cooperators, their payoff will be large. Hence the probability of
imitating the defect action will increase.

On a more technical side, our key idea in analyzing this dynamics is that it can be viewed as a {\em perturbation} of the {\em Voter Model (VM)} \cite{aldous_fill,liggett}. We write the transition kernel of the Markov chain corresponding to our dynamics in terms of the pairwise correlations in the voter model.
We analyze the pairwise correlations and show that in graphs with relatively large girth,
local clustering occurs. In other words, we show that when the girth of the graph is relatively large, the cooperators will cluster together and help each other.
We also upper-bound the convergence time of
our dynamics using techniques similar to those used to bound the convergence
time of the voter model \cite{aldous_fill,Welsh}.

The dynamics studied in our paper is closely related to that of \cite{nowak,nowak_lieberman}. However, our approach differs in two
essential ways: first, unlike \cite{nowak,nowak_lieberman}, our results are rigorously proved. More importantly, \cite{nowak_lieberman},
focuses on the average degree of nodes, and ignores the role of small cycles. In our examples, we give graphs with short cycles that
do not show the behavior predicted in \cite{nowak_lieberman}.

\section{Problem Definition and Preliminaries}
\label{sec:definition}

We are given a set of players $V$, with $|V| = n$, that play prisoner's dilemma repeatedly at time
steps $t = 0, 1, 2,3, \dots$. The players interact on an undirected $k$-regular connected graph $G = (V,E)$
; each player only plays with its neighbors. Each player $i \in V$ has two alternative
strategies: cooperate (C) or defect (D). We associate variable $\x{i}{t} \in \{0,1\}$ to each node $i$ to represent its action
at time $t$; $\x{i}{t} =1$ ($\x{i}{t} = 0$) if player $i$ cooperates (defects) at time $t$. The vector,
$\vec{X}_t \in \{0,1\}^n$ represents the configuration of the system at time $t$. The payoff matrix
is a $2 \times 2$-matrix illustrated in Table \ref{tab:payoff}. Note that the game
is symmetric. The total payoff of player $i \in V$ at time $t$, $\U{i}{t}$, is:
\begin{eqnarray}
\U{i}{t} = -kc  \x{i}{t} + b  \sum_{j \sim i} \x{j}{t}
\label{eq:payoff}
\end{eqnarray}
where $j \sim i$ means that $j$ is a neighbor of $i$ in $G$. Further, we assume that $k(b+c) < 1$.

\begin{table}
\begin{center}
\begin{tabular}{ | l | l | l | }
\hline	
 $j$ $\backslash$ $i$ & C & D \\
\hline
 C & b-c & b \\
\hline
 D & -c & 0 \\
\hline
\end{tabular}
\end{center}
\caption{Payoff of player $i$}
\label{tab:payoff}
\end{table}

At each time step, one of the nodes is chosen uniformly at random to update its action. Let $\mathcal{A}_{t}^{i}$
be the event that at time $t$, node $i$ is chosen to update its action. If $\mathcal{A}_{t}^{i}$ occurs and $i$ alternates its
strategy then all the other players update their payoffs. Node $i$ updates its action according to the following mechanism:

\vspace{5mm}
{\noindent \textbf{Weak Imitation of Success (WIS):}} In the WIS dynamics, the updating node $i$ samples one of its neighbors and imitates its action. The sampling is slightly biased in favor of neighbors that have higher payoffs. In particular, node $i$ first samples a selector, $S_t$, that is an independent Bernouli($\epsilon$) random variable, where $\epsilon$ is a small positive number. If $S_t = 0$ then node $i$ samples one of its neighbors uniformly at random and
copies its action. On the other hand, if $S_t = 1$, node $i$ will take a biased sample among its neighbors, where $i$ favors neighbors
with higher payoffs. Formally,
\begin{align}
\P{}{i ~\tm{samples}~ j~|~\mathcal{A}_{t}^{i}, \vec{X}_t, \{S_t = 1\}} =  \quad \quad \quad \quad \nonumber \\\frac{1}{k} \left(\U{j}{t} +1 - \frac{1}{k}  \sum_{h \sim i} \U{h}{t} \right).
\label{eq:WIS:copy}
\end{align}

{\noindent It is easy to check that for $k(b+c) < 1$,}
\[
0 \leq \P{}{i ~\tm{samples}~ j~|~\mathcal{A}_{t}^{i}, \vec{X}_t, \{S_t = 1\}} \leq 1,
\]
and,
\[
\sum_{j \sim i} \P{}{i ~\tm{samples}~ j~|~\mathcal{A}_{t}^{i}, \vec{X}_t, \{S_t = 1\}} = 1.
\]
Putting these two cases together, the probability that node $i$ updates its action to $C$ is:

\begin{eqnarray}
 \P{}{\x{i}{t+1} = 1|\mathcal{A}_{t}^{i}, \vec{X}_t}  = \quad \quad \quad \quad \quad \quad \quad \quad \quad \nonumber \\ \frac{1}{k} \sum_{j \sim i}  \x{j}{t} \left[ 1 - \epsilon + \epsilon \left(\U{j}{t}+1 - \frac{1}{k}\sum_{h \sim i} \U{h}{t}\right)\right]  \label{eq:updat_WIS}
\end{eqnarray}

{\noindent It is worth noting that when $\epsilon = 0$, the dynamics coincides with the VM \cite{aldous_fill,liggett}. In the VM, the sampling of a neighbor is uniform and the updating node is more likely to imitate the strategy that the majority of its neighbors play, regardless of their payoffs. On the other hand, when $\epsilon = 1$, the sampling is based on the payoff of the PD games. We call this updating rule the PD dynamics. Note that for any $0< \epsilon <1$, the WIS is a mixture of VM and PD dynamics.}

It can be readily seen that WIS defines a Markov chain on $\{0,1\}^n$ with two absorbing states; the all zero state, $\vec{X} = \vec{0}$, and the all one state, $\vec{X} = \vec{1}$. Since the graph size is finite, starting from any configuration, the chain reaches
either $\vec{0}$ or $\vec{1}$ in a finite time. We denote the probability that the chain starting from configuration $\vec{X}_0$ converges to the all one state, $\vec{X} = \vec{1}$, by $\pi_{WIS,\vec{X}_0}$, i.e.,

\begin{eqnarray*}
\pi_{WIS,\vec{X}_0} = \lim_{t \rightarrow \infty} \mu_{t, \vec{X}_0}(\vec{X} = \vec{1})
\end{eqnarray*}

{\noindent where $ \mu_{t, \vec{X}_0}(\vec{X})$ is the measure defined by the Markov chain at time $t$ starting from configuration $\vec{X}_0$.
In this paper, we mainly work with one fixed initial condition, thus hereafter, we drop the subscripts $\vec{X}_0$.


\section{Main Theorem}
\label{sec:result}
In this section, we present the main result of the paper. It states that
under the WIS dynamics, when interactions are local and the graph does not
have short cycles, and the benefit to cost ratio $b/c$ is moderately large,
then the expected number of cooperators increases.

\begin{theorem}
\label{thm:main}
Suppose graph $G$ is a connected $k$-regular graph with girth at least $7$. Further, suppose at time $0$, a random pair of neighbors play $C$ and the rest of the nodes play $D$; the system evolves according to the $WIS$ afterwards. For any $\gamma > 0$, $\epsilon = n^{-(4 + \gamma)}$, $b/c > \frac{k^2}{k-1}$, and $n$ sufficiently large, the probability that the chain converges to the all one state, $\vec{X} = \vec{1}$, is strictly larger than $2/n$. More precisely, there exists a positive constant $f$ that is bounded away from zero and
\begin{align*}
\pi_{WIS}  \geq \frac{2}{n}+ \frac{\epsilon}{n} f.
\end{align*}
The constant $f$ is increasing in the ratio $b/c$.
\end{theorem}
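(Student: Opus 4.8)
The plan is to treat WIS as a first-order perturbation of the voter model and to ride on the martingale $M_t:=\sum_i X_i^t$. Because $G$ is $k$-regular, $M_t$ is a martingale under the voter model ($\epsilon=0$), so from the two-cooperator seed the unperturbed chain reaches $\vec 1$ with probability exactly $2/n$; the theorem concerns the $O(\epsilon)$ correction. By \eqref{eq:updat_WIS} the conditional one-step drift of $M_t$ under WIS is $\frac{\epsilon}{n}\sum_i g_i(\vec X_t)$, where $g_i(\vec X)=\frac1k\sum_{j\sim i}X_jU_j-\frac1{k^2}N_i\sum_{h\sim i}U_h$ with $N_i=\sum_{j\sim i}X_j$ is the coefficient of $\epsilon$ in the transition kernel; note $k(b+c)<1$ makes $|\sum_i g_i|=O(n)$ uniformly. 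Optional stopping of $M_t-\frac{\epsilon}{n}\sum_{s<t}\sum_i g_i(\vec X_s)$ at the absorption time $\tau$ (licit since $\mathbb E[\tau]<\infty$ on a finite absorbing chain) together with $M_\tau\in\{0,n\}$ give the exact identity $\pi_{WIS}=\tfrac2n+\tfrac{\epsilon}{n^2}\,\mathbb E_{WIS}\big[\sum_{t<\tau}\sum_i g_i(\vec X_t)\big]$; since $\sum_i g_i$ vanishes at $\vec0$ and $\vec1$, the truncated sum equals $\sum_{t\ge0}\mathbb E[\sum_i g_i(\vec X_t)]$. Finally, coupling WIS and the voter model step by step, their path laws differ by $O(\epsilon)$ per step while the moments of $\tau$ are polynomially bounded (via the voter-model convergence-time estimates mentioned in the introduction), so $\mathbb E_{WIS}[\,\cdot\,]=\mathbb E_{VM}[\,\cdot\,]+o(n)$ as soon as $\epsilon$ is a small enough inverse power of $n$ — this is precisely why $\epsilon=n^{-(4+\gamma)}$ is imposed. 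It therefore suffices to prove $\mathbb E_{VM}\big[\sum_{t<\tau}\sum_i g_i(\vec X_t)\big]\ge n f'$ for a constant $f'>0$ that increases with $b/c$.

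Next I would make $\sum_i g_i$ explicit. Plugging the payoff \eqref{eq:payoff} into $g_i$ and summing over $i$, every resulting term is a weighted count of walks of length at most three between cooperators. Girth $\ge7$ is exactly what is needed to collapse these: it forbids triangles and $4$-cycles (any two vertices have at most one common neighbor; no length-$3$ path joins adjacent or distance-$2$ vertices) and forbids $5$- and $6$-cycles (distance-$3$ vertices are joined by a unique shortest path, and every radius-$3$ ball is a tree, so the numbers of vertices at distances $1,2,3$ from any vertex are $k,\,k(k-1),\,k(k-1)^2$). After simplification all $|X|$-linear contributions cancel and one is left with
$$\sum_i g_i(\vec X)=\frac{b}{k^2}\,E^{(3)}_{\ne}(\vec X)-\frac{b(k-1)^2}{k^2}\,E^{(1)}_{\ne}(\vec X)-\frac ck\,E^{(2)}_{\ne}(\vec X),$$
where $E^{(d)}_{\ne}(\vec X)$ is the number of unordered pairs of vertices at distance exactly $d$ that disagree in $\vec X$. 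On the seed this evaluates to $\tfrac{2b(k-1)^2}{k^2}-2c(k-1)$, which is positive exactly when $b/c>\tfrac{k^2}{k-1}$ and is increasing in $b/c$ — this is the origin of both the threshold on $b/c$ and the monotonicity of $f$.

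The crux is the lower bound $\mathbb E_{VM}\big[\sum_{t<\tau}\sum_i g_i(\vec X_t)\big]\ge nf'$. By linearity this equals $\frac b{k^2}\sum_{d(j,l)=3}\rho_{jl}-\frac{b(k-1)^2}{k^2}\sum_{d(j,l)=1}\rho_{jl}-\frac ck\sum_{d(j,l)=2}\rho_{jl}$, with $\rho_{jl}:=\mathbb E_{VM}\big[\sum_{t<\tau}\I{X^t_j\ne X^t_l}\big]$ the expected time before consensus that $j$ and $l$ disagree, the voter model being started from the two-adjacent-cooperator seed $\{u,v\}$. Using the coalescing-random-walk dual of the voter model — and the cancellation of everything accrued after the two dual lineages coalesce — one gets the clean formula $\rho_{jl}=\mathbb E\big[\#\{t<\sigma_{jl}:\text{exactly one of two independent random walks from }j,l\text{ sits in }\{u,v\}\}\big]$, where $\sigma_{jl}$ is the meeting time of the two walks. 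Summing over $j,l$ and using reversibility of the pair walk killed on the diagonal, the target reduces to $\sum_{(x,y)}\mathbb E_{(x,y)}\big[\sum_{t<\sigma}\psi(d(\xi_t,\eta_t))\big]$, where the outer sum is over the $\Theta(n)$ configurations $(x,y)$ with exactly one of $x,y$ in $\{u,v\}$, $(\xi,\eta)$ are independent walks from $(x,y)$ with meeting time $\sigma$, and $\psi(1)=-\tfrac{b(k-1)^2}{k^2},\ \psi(2)=-\tfrac ck,\ \psi(3)=\tfrac b{k^2}$, with $\psi$ vanishing at all other distances. Here large girth is decisive: the relevant excursions of $(\xi,\eta)$ stay in tree-like neighborhoods, so the gap $d(\xi_t,\eta_t)$ behaves like a nearest-neighbor walk on $\mathbb Z_{\ge0}$ that steps down with probability $1/k$ and up with probability $(k-1)/k$, and the expected occupation of levels $1,2,3$ before absorption at $0$ can be computed (or two-sidedly bounded). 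The quantitative form of ``cooperators cluster and help each other'' is that a cluster of cooperators in a high-girth graph has an essentially free $k$-ary branching boundary, so the distance-$3$ occupation — the only one carrying a positive weight — exceeds the distance-$1$ and distance-$2$ occupations by a positive multiple of $n$ as soon as $b/c$ clears the threshold, giving $f>0$ increasing in $b/c$. Carrying out these coalescing-walk occupation-time estimates on the actual finite graph (rather than on the infinite $k$-regular tree), with enough precision to pin down both the sign and a clean constant, is the main obstacle and the bulk of the work.
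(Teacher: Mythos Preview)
Your skeleton is sound and your drift formula is correct: after substituting $E^{(d)}_{CC}=\tfrac12\big(k(k-1)^{d-1}|X|-E^{(d)}_{\ne}\big)$ into the paper's configuration-level drift \eqref{eq:Delta:sum:config}, the linear-in-$|X|$ pieces do cancel and one lands exactly on your expression in $E^{(1)}_{\ne},E^{(2)}_{\ne},E^{(3)}_{\ne}$; the seed evaluation and the threshold $b/c>k^2/(k-1)$ fall out as you say. Your optional-stopping identity is also correct and is a cleaner substitute for the paper's selector-conditioning (Cases~1--3 and inequality \eqref{eq:thm:exp1}); note that the paper's $\sum_{t^*<T^*}\E{\Delta_{t^*}}$ is nothing but $\tfrac1n\sum_{t}\mathbb E_{VM}\big[\sum_i g_i(\vec X_t)\big]$, so the two reductions meet at the same object.

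Where your route genuinely diverges is in how you propose to bound that object. The paper never invokes duality. It stays with the forward dynamics: it defines the averaged pair correlations $q^{(d)}_t$ in \eqref{eq:def:q}, uses girth $\ge 7$ to close the recursion \eqref{eq:corr:ave} for $d\le 3$ (with $q^{(4)}_t\le q^{(3)}_t$ as the only inequality), rewrites $\E{\Delta_{t^*}}$ as the linear combination \eqref{eq:Delta:Ave}, and then (Lemma~\ref{lem:lowr}) treats $\sum_{t^*<T^*}\E{\Delta_{t^*}}$ as a finite-horizon sum of a $4\times4$ linear dynamical system, bounded via an explicit eigenvalue decomposition of the matrix $B$. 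That purely algebraic step is what produces the constant
\[
\frac{k(k-1)^3\sqrt{(k-1)/2}}{k^2(k-1)\sqrt{(k-1)/2}+k}\Big(b-\tfrac{k^2}{k-1}c\Big),
\]
and it is the entire content of the ``clustering helps'' claim in the paper's proof. Your coalescing-walk picture is a legitimate alternative, but two points deserve care. First, the replacement $\mathbb E_{WIS}\to\mathbb E_{VM}$ is not a one-line coupling: you are integrating a functional of size $O(n\tau)$ and need moment control on $\tau$ under \emph{both} chains; the paper sidesteps this entirely because the selector decomposition makes the relevant expectation literally a VM expectation on the event $\{\text{at most one }S_t=1\}$, with all higher-order events thrown away (Case~3). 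Second, your final heuristic --- that $d(\xi_t,\eta_t)$ is a biased $\mathbb Z_{\ge 0}$ walk and that the distance-$3$ occupation dominates --- is where the real work hides: the weights $\psi(1),\psi(3)$ nearly cancel (on the tree the occupation ratios are geometric with ratio $k-1$, so $\tfrac{b}{k^2}O_3-\tfrac{b(k-1)^2}{k^2}O_1\approx 0$), which is exactly why the threshold is the delicate $k^2/(k-1)$ and why the paper needs the full matrix computation rather than a soft comparison. Your reversal is fine in principle (the off-diagonal pair chain is reversible for the uniform measure), but after reversal most of your $\Theta(n)$ starting pairs have large initial separation, so the tree approximation for the distance process does not apply globally; you would still need a finite-graph argument for the occupation times at levels $1,2,3$, and that is essentially equivalent in difficulty to the paper's Lemma~\ref{lem:lowr}.

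In short: your reduction is an equally valid (arguably cleaner) way to reach the key quantity, but the part you flag as ``the main obstacle'' is indeed the whole proof, and the paper handles it by a different, self-contained method --- a closed $4$-dimensional linear recursion for the averaged correlations plus explicit matrix algebra --- rather than through duality and occupation times.
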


Initially, the number of cooperators in the network is $2$. Eventually, when the system
converges, the expected number of cooperators is $n \pi_{WIS}$. When $\epsilon = 0$, i.e., when we ignore the payoffs,
the expected number of cooperators stays at $2$. Our result shows that in our setting, when we slightly increase the $\epsilon$,
we strictly increase the value of $n \pi_{WIS}$. This is an indication that cooperation has a higher evolutionary fitness
on graphs with large girth.

Consider an updating
node $i$ and two of its neighbors $j$ and $h$.
Suppose $j$ cooperates and $h$ defects. If the set of neighbors of $j$ and $h$ are the same, which may happen
in graphs with cycles of length $4$, then the payoff of $h$ is always higher than payoff of
$j$ which results in the increase in the probability that $i$ samples $h$.
On other hand,
suppose that the set of neighbors of $j$ and $h$ are disjoint and suppose we can show that as a result of
clustering, in expectation, node $j$ has one more cooperator neighbor than $h$ does. In this case, the
difference between payoff of $j$ and payoff of $h$ is $b - kc$. Thus if $b/c> k$, then
the payoff of $j$ will be higher than the payoff of $h$ implying than $i$ is more likely
to sample $j$. In fact, the condition on the ratio of $b/c > k (1 + o(1))$ is needed even on an infinite tree.

\begin{figure*}
\centering
    \psfrag{t1}{\small{$0$}}
    \psfrag{t2}{\small{$1$}}
    \psfrag{t3}{\small{$2$}}
    \psfrag{t*}{\small{$t^{*}$}}
    \psfrag{t*1}{\small{$t^{*}+1$}}
    \psfrag{theta}{\small{$\theta$}}
    \psfrag{VM}{\small{$VM$}}
    \psfrag{PD}{\small{$PD$}}
    \psfrag{number}{\footnotesize{Expected $\#$ of Cooperators}}
    \psfrag{two}{\small{$2$}}
    \psfrag{towpdelta}{\small{$2+ \E{\Delta_{t^{*}}}$}}
	\includegraphics[height=.3\textheight,clip]{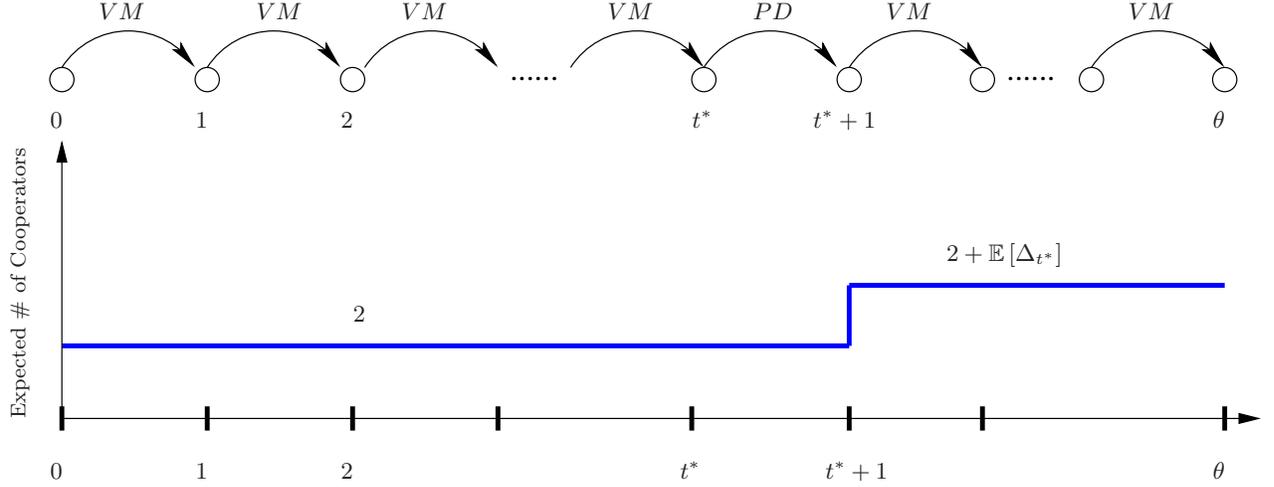}
\caption{Illustration of Claim \ref{thm:clm}; the expected number of cooperators increases when we apply the PD for the first time in step $t^{*}$.}
\label{fig:time}
\end{figure*}

Note that for each $i \in V$, $\E{\lim_{t \rightarrow \infty} \x{i}{t}} = \pi_{WIS} $. Since random variable $\x{i}{t}$ is bounded, Fatou's lemma implies that:}
\[
\limsup_{t \rightarrow \infty} \E{\x{i}{t}} \leq \pi_{WIS} \leq \liminf_{t \rightarrow \infty} \E{\x{i}{t}}
\]
Since $\limsup$ of a sequence is at least equal to the $\liminf$, the above inequalities imply that
the sequence $\E{\x{i}{t}}$ converges to $\pi_{WIS}$.
Therefore, to compute the probability of convergence to the all one state, it suffices to analyze the evolution of the marginal probabilities.
In particular, let $\p{i}{t}$ be the probability that player $i$ cooperates at time $t$.
Then for each $i \in V$, $\lim_{t \rightarrow \infty} \p{i}{t} = \pi_{WIS}$. Also, note that the
expected number of cooperators converges to $n \pi_{WIS}$. A major part of the proof of the
above theorem studies the evolution of the marginal probabilities and the expected number
of cooperators.

\vspace{2mm}
{\noindent \textbf{\normalsize{Main Proof Ideas:}}}
\vspace{2mm}

The key idea in the analysis is that the WIS is a mixture of two dynamics: the voter model (VM) and
the prisoner's dilemma (PD). At each time step, depending on the selector $S_t$, the updating player follows one of these
dynamics; if $S_t = 0$ the player follows the VM, otherwise it follows the PD.

One of the basic properties of the voter model on regular graphs is that given an arbitrary configuration,
at each time step $t$, the expected drift in the number of cooperators is zero: In a regular graph, for any edge
$(i,j)$, the probability that $i$ copies $j$ is the same as the
probability that $j$ copies $i$. Suppose at time $t$, $i$ plays $C$ and $j$ plays $D$. If $i$ copies $j$, the
number of cooperators decreases by one. On the other hand, if $j$ copies $i$, the
number of cooperators increases by one. Since the probability of these two events are the same, the expected drift
along each edge is zero, implying that the expected drift is zero regardless of the configuration at time $t$.

%
%
%

Let $N_{\theta}$ denote the number of cooperators at time ${\theta}$, i.e., $N_{\theta} = \sum_{i \in V} \x{i}{\theta}$.
First, we study $\E{N_{\theta}}$ and compute a lower-bound for it.
Remember that in the WIS, at each time step $t = 0,1, \ldots, \theta-1$, we first sample a selector, $S_{t}$, that is an independent Bernouli($\epsilon$) random variable, based on which we follow either VM or PD. Given the sequence $S_0, S_1, \ldots, S_{\theta -1}$, consider the following three cases:

\vspace{2mm}
{\noindent \textbf{Case 1: all the selectors $S_0, S_1, \ldots, S_{\theta -1}$ are zero:}}
\vspace{2mm}

{\noindent In this case the system has followed the VM process. Therefore, the expected number of cooperators at time $\theta$ remains $2$. }

\vspace{2mm}
{\noindent \textbf{Case 2: exactly one of the selectors $S_0, S_1, \ldots, S_{\theta -1}$ is one:}}
\vspace{2mm}

{\noindent Suppose the selector at time $t^*$ is one. Since the system has followed the VM up to time $t^*$, the expected number of cooperators at time $t^*$ is $2$. However, after applying the PD at time step $t^*$, the expected number of cooperators changes. We denote the expected drift by $\E{\Delta_{t^*}}$ (see Figure \ref{fig:time}). A key part of the analysis is to show that $\E{\Delta_{t^*}}$ is positive, i.e., applying the PD for one period of time results in an increase in the expected number of cooperators. More precisely,}

\begin{claim}
\label{thm:clm}
Given graph $G$, suppose at time $0$, a random pair of neighbors play $C$ and the rest of the  nodes play $D$; the system evolves according to the $WIS$ afterwards. If $b/c \geq k^2/(k-1)$, then $0 \leq \E{\Delta_{t^{*}}} \leq 1$.
\end{claim}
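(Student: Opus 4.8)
The plan is to compute $\E{\Delta_{t^*}}$ exactly in terms of the pairwise correlations of the voter model run up to time $t^*$, and then show positivity under the stated hypothesis on $b/c$. The starting point is that conditioned on all selectors before $t^*$ being zero, the configuration $\vec{X}_{t^*}$ is distributed as the voter model started from a random pair of adjacent cooperators; in particular, by the zero-drift property stated in the excerpt, $\E{N_{t^*}} = 2$. At step $t^*$ we apply one PD update: a uniformly random node $i$ is chosen, and then $i$ copies a neighbor $j$ with probability proportional to $\U{j}{t^*} + 1 - \frac1k\sum_{h\sim i}\U{h}{t^*}$ as in \eqref{eq:WIS:copy}. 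The drift $\Delta_{t^*}$ is $+1$ if $i$ was a defector who copies a cooperator, $-1$ if $i$ was a cooperator who copies a defector, and $0$ otherwise.

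\medskip

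\noindent I would organize the computation as follows. First, write $\E{\Delta_{t^*}} = \frac1n \sum_{i} \E{\delta_i}$ where $\delta_i$ is the expected change given $i$ is the updater, the expectation taken over $\vec{X}_{t^*}$. Since copying along the edge $(i,j)$ contributes $+(1-2\x{i}{t^*})\x{j}{t^*}$ times its probability, and using \eqref{eq:payoff} to expand $\U{j}{t^*} = -kc\x{j}{t^*} + b\sum_{\ell\sim j}\x{\ell}{t^*}$, the PD sampling probability becomes an affine function of the neighbors' configurations. This decomposes $\E{\Delta_{t^*}}$ into the voter-model drift (which is $0$) plus $\epsilon$-order terms built from correlations like $\E{\x{i}{t^*}\x{j}{t^*}}$, $\E{\x{j}{t^*}\x{\ell}{t^*}}$ with $j\sim i$, $\ell\sim j$, and $\E{\x{j}{t^*}\x{\ell}{t^*}}$ with $\ell\sim i$. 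The girth assumption (entering at the level of Theorem \ref{thm:main}, girth $\geq 7$, but here only a mild local-tree structure is needed for the bookkeeping) ensures these neighborhoods are trees so that distances are well-defined and no double-counting occurs. The key sign input is the clustering phenomenon advertised in the introduction: a cooperating neighbor $j$ of $i$ tends, in the voter model, to have more cooperating neighbors than a defecting neighbor $h$ of $i$; quantitatively the expected excess is about one cooperator, giving a payoff gap $b - kc > 0$ precisely when $b/c > k$, and the sharper threshold $k^2/(k-1)$ absorbs the normalization factor $1/k$ and lower-order correlation corrections.

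\medskip

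\noindent For the upper bound $\E{\Delta_{t^*}} \leq 1$, I would argue crudely: at a single update step the number of cooperators changes by at most $1$ in absolute value, so $\Delta_{t^*} \in \{-1,0,1\}$ deterministically, hence $\E{\Delta_{t^*}} \leq 1$ trivially. (Indeed one gets the stronger $|\E{\Delta_{t^*}}| \le 1$.) So the entire content is the lower bound $\E{\Delta_{t^*}} \geq 0$.

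\medskip

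\noindent The main obstacle I anticipate is controlling the voter-model pairwise correlations $\E{\x{u}{t^*}\x{v}{t^*}}$ carefully enough to extract the correct sign, since the naive ``one extra cooperator neighbor'' heuristic has to be made rigorous uniformly over the (random) starting edge and over the update time $t^*$, which itself ranges over $0,1,\dots,\theta-1$. Concretely, I expect one needs: (i) the coalescing-random-walk dual of the voter model to express $\E{\x{u}{t^*}\x{v}{t^*}}$ as a meeting probability of two walks started at $u,v$, then restricted to the event that the dual walks trace back into the initial cooperating edge; (ii) monotonicity/exchangeability arguments showing the correlation decays with graph distance and that a cooperator's neighborhood is positively biased; and (iii) a uniform-in-$t^*$ bound so the inequality $b/c \ge k^2/(k-1)$ suffices at every step rather than just asymptotically. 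Handling case $t^*=0$ separately (where $\vec{X}_{t^*}$ is exactly the deterministic ``one edge of cooperators'' configuration) should serve as the clean base case that motivates the threshold, and the general $t^*$ case then follows by showing the voter dynamics only improves the clustering.
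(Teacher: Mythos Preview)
Your opening computation and the upper bound match the paper exactly: the paper also expands the PD sampling probability via \eqref{eq:payoff}, sums over the updating node, and reduces $\E{\Delta_{t^*}}$ to a linear combination of voter-model pairwise correlations $\p{uv}{t^*}$ (their equation \eqref{eq:Delta:sum}); and the bound $\E{\Delta_{t^*}}\le 1$ is dispatched with the same one-line observation that a single update changes $N$ by at most one.

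Where you diverge is in how the correlations are controlled. You propose the coalescing-random-walk dual and generic monotonicity-in-distance arguments. The paper instead \emph{aggregates}: it defines $\q{(d)}{t}$ as the average of $\p{uv}{t}$ over all pairs $(u,v)$ joined by a path of length $d$, rewrites $\E{\Delta_{t^*}}$ as the four-term expression \eqref{eq:Delta:Ave} in $\q{(0)}{t^*},\dots,\q{(3)}{t^*}$, and observes that because girth $\ge 7$ makes every radius-$3$ ball a tree, the $\q{(d)}{t}$ for $1\le d\le 3$ satisfy the one-step linear recursion \eqref{eq:corr:ave}, identical to the recursion on an infinite $k$-regular tree. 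A straightforward induction on that recursion, seeded by the explicit initial values $\q{(0)}{0}=2/n$, $\q{(1)}{0}=2/(nk)$, $\q{(d)}{0}=0$ for $d\ge 2$, yields $\q{(4)}{t}\le\q{(3)}{t}\le\q{(2)}{t}\le\q{(1)}{t}$ for all $t\le t^*$, which is precisely your ``correlation decays with distance'' made quantitative. Monotonicity then shows $\E{\Delta_{t^*}}$ is nondecreasing in $b/c$, so it suffices to check $b/c=k^2/(k-1)$; at that value the paper unrolls the recursion into a geometric sum whose $s=0$ term vanishes by the initial conditions and whose remaining terms are nonnegative by the monotonicity chain.

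Two remarks on the comparison. First, you understate the role of girth: it is not just bookkeeping to avoid double counting, but is what allows the averaged correlations to satisfy a closed tree-like recursion up through distance $3$ (with $\q{(4)}{t}$ appearing only as a boundary term controlled by $\q{(3)}{t}$); without girth $\ge 7$ the sums in \eqref{eq:Delta:sum} cannot be rewritten as \eqref{eq:Delta:Ave}. Second, your duality route is viable in principle and does supply the right intuition, but extracting the exact threshold $k^2/(k-1)$ and the uniform-in-$t^*$ sign from meeting probabilities would in practice require you to rederive the same recursion in dual language; the paper's direct recursion-plus-induction is shorter and produces the constant by pure algebra.
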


{\noindent The proof of this claim is presented in Section \ref{sec:clm}. From $t^*+1$ to $\theta$, the system again evolves according to the VM model, therefore in the time periods $t^*+1$ to $\theta$, the expected number of cooperators remains the same.}\\

\vspace{2mm}
{\noindent \textbf{Case 3: more than one of the selectors $S_0, S_1, \ldots, S_{\theta -1}$ are one:}}
\vspace{2mm}

{\noindent In this case, we just lower-bound the expected number of cooperators by zero.}

\vspace{3mm}

{\noindent Putting the three cases together, we have:}

\begin{align}
& \E{N_{\theta}}  \geq \E{N_{\theta} \I{ \bigwedge_{t = 0}^{\theta -1} \{S_{t} = 0\}}}
\nonumber \\
& + \sum_{t^{*} = 0}^{\theta -1} \E{N_{\theta} \I{ \bigwedge_{t = 0, t \neq t^{*}}^{\theta -1} \{S_{t} = 0\} \wedge \{S_{t^{*}} = 1\}}}
\nonumber \\
& = 2 (1 - \epsilon)^{\theta} + \epsilon (1 - \epsilon)^{\theta -1} \sum_{t^{*} = 0}^{\theta -1} \left(2 + \E{\Delta_{t^{*}}} \right).
\label{eq:thm:exp1}
\end{align}
{\noindent Define, }
\begin{align}
\delta_{\theta} = \frac{1}{\theta} \sum_{t^{*} = 0}^{\theta -1} \E{\Delta_{t^{*}}}
\label{eq:delta:def}
\end{align}
{\noindent We can rewrite \ref{eq:thm:exp1} as,}
\begin{align}
\E{N_{\theta}} \geq 2 (1 - \epsilon)^{\theta} + \left[2+ \delta_\theta \right] \theta \epsilon (1 - \epsilon)^{\theta},
\label{eq:thm:exp}
\end{align}
Function $g(\epsilon) = (1 - \epsilon)^{\theta}$ is convex for $\theta \geq 2$, therefore $(1 - \epsilon)^{\theta} \geq 1 - \epsilon \theta$.
Substituting this in \eqref{eq:thm:exp}, we have:
\begin{align}
\E{N_{\theta}} \geq 2 - \left[2 +  \delta_{\theta} \right]  \theta^2 \epsilon^2 + \theta \delta_{\theta}  \epsilon \geq 2 -3  \theta^2 \epsilon^2 + \theta \delta_{\theta}  \epsilon
\label{eq:thm:exp:simple}
\end{align}
where in the last inequality, we used the fact that $\delta_{\theta} < 1$, which follows from Claim \ref{thm:clm}.
Let $T^{*}$ be $0.5(k+1)n^{2+ \gamma/3}$, and let $\mathcal{C}_{T^{*}}$ be the event that the chain converges by time $T^{*}$,
we have:
\begin{align}
\E{N_{T^{*}}} & = \E{N_{T^{*}} |~\mathcal{C}_{T^{*}}} \P{}{\mathcal{C}_{T^{*}}} + \E{N_{T^{*}} |~\overline{\mathcal{C}}_{T^{*}}} \P{}{\overline{\mathcal{C}}_{T^{*}}} \nonumber \\
& \leq \E{N_{T^{*}} |~ \mathcal{C}_{T^{*}}} \P{}{\mathcal{C}_{T^{*}}} + n \P{}{\overline{\mathcal{C}}_{T^{*}}}
\label{eq:thm:Tstar}
\end{align}
where in the last inequality we use that $N_{T^{*}} \leq n$. Note that $\pi_{WIS} \geq \frac{1}{n}\E{N_{T^{*}} | ~\mathcal{C}_{T^{*}}} \P{}{\mathcal{C}_{T^{*}}}$. Thus putting \eqref{eq:thm:exp:simple} and \eqref{eq:thm:Tstar} together, we have:
\begin{align}
\pi_{WIS} \geq \frac{1}{n} \left[2-3  {T^{*}}^2 \epsilon^2 + T^{*} \delta_{T^{*}}  \epsilon -n \P{}{\overline{\mathcal{C}}_{T^{*}}}\right]
\label{eq:thm:last}
\end{align}

%

Having inequality \eqref{eq:thm:last}, the rest of the proof consists of establishing an upper-bound for $\P{}{\overline{\mathcal{C}}_{T^{*}}}$ in Lemma \ref{lem:uppr} and a lower-bound for $T^{*} \delta_{T^{*}}$ in Lemma \ref{lem:lowr}.

\begin{lemma}
\label{lem:uppr}
Given graph $G$, suppose the system evolves according to the $WIS$. For $\epsilon < n^{-4}$, we have $\P{}{\overline{\mathcal{C}}_{T^{*}}} \leq \frac{1}{2^{n^{\gamma/3}}}$.
\end{lemma}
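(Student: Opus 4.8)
The plan is to treat the WIS chain as a small perturbation of the voter model and to combine a Lyapunov (drift) estimate for the number of cooperators with a block‑amplification argument, in the spirit of the voter‑model convergence‑time bounds of \cite{aldous_fill,Welsh}. Write $N_t=\sum_{i\in V}\x{i}{t}$ and let $\tau$ be the first time the chain hits $\vec{0}$ or $\vec{1}$, so that $\overline{\mathcal{C}}_{T^{*}}=\{\tau>T^{*}\}$. On a $k$‑regular graph the VM branch of the update produces zero drift in $N_t$ (along every edge $(i,j)$ the probability that $i$ copies $j$ equals the probability that $j$ copies $i$), while the PD branch is used only with probability $\epsilon$ and changes $N_t$ by at most one; hence $|\E{N_{t+1}-N_t\mid\mathcal{F}_t}|\le\epsilon$. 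Moreover, whenever $0<N_t<n$ there is at least one discordant edge $(i,j)$, and with probability at least $\tfrac1n\cdot\tfrac{1-\epsilon}{k}\ge\tfrac{1}{(k+1)n}$ node $i$ is selected and copies $j$ through the VM branch, flipping $\x{i}{}$; so $\E{(N_{t+1}-N_t)^2\mid\mathcal{F}_t}\ge\tfrac{1}{(k+1)n}$ on $\{0<N_t<n\}$.

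The next step is a drift computation for the Lyapunov function $\phi_t:=N_t(n-N_t)$, which vanishes exactly on the two absorbing states and is at least $n-1$ elsewhere. Expanding, $\E{\phi_{t+1}-\phi_t\mid\mathcal{F}_t}=(n-2N_t)\,\E{N_{t+1}-N_t\mid\mathcal{F}_t}-\E{(N_{t+1}-N_t)^2\mid\mathcal{F}_t}$; on $\{0<N_t<n\}$ the first term is at most $\epsilon n$ and the second is at least $\tfrac{1}{(k+1)n}$, and since $\epsilon<n^{-4}$ this gives $\E{\phi_{t+1}-\phi_t\mid\mathcal{F}_t}\le-\tfrac{1}{2(k+1)n}$ for $n$ large. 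Thus $M_t:=\phi_{t\wedge\tau}+\tfrac{t\wedge\tau}{2(k+1)n}$ is a nonnegative supermartingale, and optional stopping yields $\E{\tau\mid\vec{X}_s}\le 2(k+1)n\,\phi(\vec{X}_s)$ for every $s$; in particular $\E{\tau}\le 2(k+1)n\,\phi_0$ from the prescribed initial condition (where $\phi_0=2(n-2)$), and $\E{\tau\mid\vec{X}_s}\le 2(k+1)n\cdot\tfrac{n^2}{4}$ from an arbitrary configuration.

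Finally I would amplify the mean estimate into the exponential tail. Partition $[0,T^{*}]$ into $m=\lceil n^{\gamma/3}\rceil$ consecutive blocks of length $L=T^{*}/m$. By the strong Markov property and Markov's inequality, conditionally on the configuration at the start of a block the chain fails to absorb within that block with probability at most $\tfrac1L\max_{\vec{X}}\E{\tau\mid\vec{X}}\le\tfrac12$, provided $L$ exceeds twice the worst‑case expected absorption time; iterating over the $m$ blocks gives $\P{}{\overline{\mathcal{C}}_{T^{*}}}=\P{}{\tau>T^{*}}\le 2^{-m}\le 2^{-n^{\gamma/3}}$. The main obstacle is exactly this last matching: the crude potential bound above is cubic in $n$, so to make $L=T^{*}/n^{\gamma/3}$ dominate the worst‑case expected absorption time one must sharpen the convergence‑time estimate — for instance by exploiting that the relevant blocks still have few cooperators, or that the girth of $G$ forces the local neighborhoods entering the computation to be tree‑like — and this is where the precise power of $n$ in $T^{*}$ enters. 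A secondary point that must be checked is that the $\epsilon$‑perturbation never reverses the sign of the drift of $\phi_t$, which is precisely what the hypothesis $\epsilon<n^{-4}$ guarantees.
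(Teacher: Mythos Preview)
Your block-amplification step (Markov's inequality plus the strong Markov property over $\lceil n^{\gamma/3}\rceil$ consecutive blocks) is exactly the paper's endgame. The gap is the one you flag yourself: the Lyapunov function $\phi_t=N_t(n-N_t)$, with drift of order $-1/((k+1)n)$, yields only $\max_{\vec X}\E{\tau\mid\vec X}=O(n^3)$, whereas the block length $L=T^{*}/n^{\gamma/3}=\tfrac12(k+1)n^2$ is quadratic, so the last inequality does not close. The fixes you suggest do not help: after the first block the configuration is arbitrary, so ``few cooperators'' is unavailable, and the quadratic absorption-time bound for the voter model is a global fact (coming from duality with coalescing random walks), not a consequence of local tree structure or large girth.

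The paper closes the gap by a different route. It does not sharpen the drift of $\phi$; instead it imports the known bound $\E{T_{\mathrm{VM}}}\le kn^2/4$ for the voter model on a connected $k$-regular graph (Aldous--Fill, Chapter~14) and then argues that the WIS absorption time differs from this by only $O(1)$ when $\epsilon<n^{-4}$. The perturbation step projects $\vec X_t\mapsto N_t$, treats the image as a reversible birth--death chain on $\{0,\ldots,n\}$, and compares the explicit Palacios--Tetali hitting-time formulas for the WIS and VM chains term by term: transition probabilities differ by at most $\epsilon$, stationary measures by $O(n^2\epsilon)$, and hitting times by $O(n^4\epsilon)$. This is where the hypothesis $\epsilon<n^{-4}$ is actually spent in the paper---not merely to preserve the sign of the $\phi$-drift as in your argument, but to keep the birth--death hitting-time expressions within $O(1)$ of their voter-model values, giving $\E{T}\le (k+1)n^2/4$ and hence $T^{*}\ge 2n^{\gamma/3}\E{T}$.
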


We prove this lemma by mapping our Markov chain into a death-birth chain and using some results on death-birth chain and the basic voter model. The proof is presented in Section \ref{sec:uppr}.


\begin{lemma}
\label{lem:lowr}
Given graph $G$, suppose at time $0$, a random pair of neighbors play $C$ and the rest of the nodes play $D$;
the system evolves according to the $WIS$ afterwards. If $b/c > \frac{k^2}{k-1}$, then for $n$ sufficiently large:
\begin{align*}
T^{*} \delta_{T^{*}} \geq \frac{k (k-1)^3 \sqrt{(k-1)/2}}{k^2(k-1) \sqrt{(k-1)/2} + k} (b - \frac{k^2}{k-1} c)
\end{align*}
\end{lemma}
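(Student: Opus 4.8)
The plan is to write $\E{\Delta_{t^*}}$ exactly in terms of the voter‑model pair correlations, sum over $t^*=0,\dots,T^*-1$, and use two conservation laws of the voter model to collapse the sum. Since the voter model has zero drift, $\E{\Delta_{t^*}}$ is the expected change of $N_t:=\sum_i\x{i}{t}$ caused by one PD update of the voter‑model configuration $\vec X_{t^*}$; by \eqref{eq:updat_WIS} with $\epsilon=1$ this is $\frac1{nk}\sum_i\sum_{j\sim i}\x{j}{t^*}\bigl(\U{j}{t^*}-\frac1k\sum_{h\sim i}\U{h}{t^*}\bigr)$ conditionally on $\vec X_{t^*}$. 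Substituting \eqref{eq:payoff} and averaging, every term becomes a correlation $\E{\x{j}{t^*}\x{\ell}{t^*}}$, and because the girth is at least $7$ the radius‑$3$ ball about any vertex is a tree, so the only $\ell$ occurring are at distance $1$, $2$ or $3$ from $j$ along a unique geodesic; after the bookkeeping,
\[
\E{\Delta_{t^*}}=\frac{2}{nk}\Bigl[c\bigl(D^{(2)}_{t^*}-k(k-1)\bigr)+\tfrac{b}{k}\bigl((k-1)^2D^{(1)}_{t^*}-D^{(3)}_{t^*}\bigr)\Bigr],
\]
where $D^{(r)}_{t}$ is the expected number of unordered vertex pairs at distance exactly $r$ both playing $C$ at time $t$ in the voter model started from a uniformly random edge (so $D^{(1)}_0=1$, $D^{(2)}_0=D^{(3)}_0=0$).

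\noindent Next I would invoke three exact identities for the voter model on a connected $k$‑regular graph of girth $\ge7$: the quadratic‑variation identity $\E{N_{t}^{2}}-4=\frac{4}{nk}\sum_{s=0}^{t-1}(k-D^{(1)}_{s})$, obtained from $\E[(N_{s+1}-N_s)^2\mid\vec X_s]=\frac{2}{nk}(kN_s-2L_s)$ (with $L_s$ the number of edges having both endpoints at $C$) together with $\E{N_s}=2$; and the pair recursions $D^{(1)}_{t+1}-D^{(1)}_{t}=\frac{2}{nk}(k+D^{(2)}_{t}-kD^{(1)}_{t})$ and $D^{(2)}_{t+1}-D^{(2)}_{t}=\frac{2}{nk}((k-1)D^{(1)}_{t}+D^{(3)}_{t}-kD^{(2)}_{t})$, where the second recursion is exactly the place needing girth $\ge7$ (uniqueness of distance‑$3$ geodesics). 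Summing $\E{\Delta_{t^*}}$ over $t^*<T^*$, using the second pair recursion to express $\sum_tD^{(2)}_t$ through $\sum_tD^{(1)}_t$ and $D^{(1)}_{T^*}$, the third to eliminate $\sum_tD^{(3)}_t$, and the quadratic‑variation identity to remove the remaining $\sum_tD^{(1)}_t$, all contributions growing with $T^*$ cancel and one is left with the exact identity
\[
T^{*}\delta_{T^{*}}=\frac{b-kc}{2}\bigl(\E{N_{T^{*}}^{2}}-4\bigr)+(c-b)\bigl(D^{(1)}_{T^{*}}-1\bigr)-\frac{b}{k}\,D^{(2)}_{T^{*}}
\]
(which one can sanity‑check at $T^*=0,1$, recovering $0$ and $\E{\Delta_0}$).

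\noindent To evaluate the right‑hand side I use that the voter model started from two neighbours is absorbed well before $T^{*}$. Optional stopping on $N_t^2-\langle N\rangle_t$, with the trivial bound $kN_s-2L_s\ge1$ for $s<\tau$, gives $\E{\tau}\le nk(n-2)=O(n^{2})$ for the absorption time $\tau$, and letting $t\to\infty$ in the quadratic‑variation identity gives $\sum_{s\ge0}(k-D^{(1)}_{s})=\frac{nk(n-2)}{2}$. Since $T^{*}=\frac12(k+1)n^{2+\gamma/3}\gg\E\tau$, the tail bound of Lemma~\ref{lem:uppr} applied with $\epsilon=0$, together with the fact that $\tau$ has polynomially bounded moments, makes $\P{}{\tau>T^{*}}$ and $\sum_{s\ge T^{*}}\P{}{\tau>s}$ super‑polynomially small; hence $D^{(2)}_{T^{*}}\le k(k-1)+o(1)$ and $\E{N_{T^{*}}^{2}}\ge2n-o(1)$. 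Plugging these in and using $c-b<0$ with $D^{(1)}_{T^{*}}\le k$,
\[
T^{*}\delta_{T^{*}}\ \ge\ (b-kc)(n-2)+(k-1)(c-2b)-o(1).
\]
Because $b/c>\frac{k^{2}}{k-1}>k$ forces $b-kc>0$, the right‑hand side tends to $+\infty$, so for $n$ large it exceeds the fixed quantity $\frac{k(k-1)^3\sqrt{(k-1)/2}}{k^2(k-1)\sqrt{(k-1)/2}+k}\bigl(b-\frac{k^2}{k-1}c\bigr)$ in the statement, which proves the lemma (the argument in fact yields a gap linear in $n$; only the displayed constant is needed downstream).

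\noindent The hard part will be Step~1 together with the telescoping in Step~2: one must count, with the correct multiplicities, which vertices appear at distances $1,2,3$ in the nested sum $\sum_i\sum_{j\sim i}\sum_{h\sim i}\sum_{\ell\sim h}$, and verify that the coefficients produced by the two pair recursions are precisely those needed to annihilate the terms growing with $T^{*}$---this is where girth $\ge7$, rather than merely $\ge5$, is essential. A secondary technical point is the super‑polynomial decay of $\P{}{\tau>T^{*}}$ used in the last step; the $O(n^{2})$ bound on $\E\tau$ above, combined with Lemma~\ref{lem:uppr}, is enough.
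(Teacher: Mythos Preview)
Your argument is correct and is genuinely different from the paper's. The paper writes $\E{\Delta_{t^*}}=(k-1)c\,\vec Y^{T}\vec Q_{t^*}$, plugs in the linear recursion \eqref{eq:corr:ave} to get $\vec Q_{t^*}=A^{t^*}\vec Q_0+\frac{2(k-1)}{nk}\sum_{s<t^*}A^{t^*-s-1}\vec R_s$ with $\vec R_s\propto q^{(4)}_s$, uses $q^{(4)}_s\le q^{(3)}_s$ to replace $A$ by a modified $4\times4$ matrix $B$, and then extracts the displayed constant from the eigenstructure of $B$; the $\sqrt{(k-1)/2}$ in the statement is an artifact of that spectral computation. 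No absorption estimate and no second-moment identity enter.

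Your route bypasses the matrix analysis entirely. The telescoping via the two pair recursions and the quadratic-variation identity gives the exact closed form
\[
T^{*}\delta_{T^{*}}=\tfrac{b-kc}{2}\bigl(\E{N_{T^{*}}^{2}}-4\bigr)+(c-b)\bigl(D^{(1)}_{T^{*}}-1\bigr)-\tfrac{b}{k}D^{(2)}_{T^{*}},
\]
which I verified against \eqref{eq:Delta:Ave} at $T^{*}=0,1$ and in general. Because $N_t$ is a bounded martingale absorbed at $\{0,n\}$ well before $T^{*}$, the right-hand side is $(b-kc)n+O(1)$, so you recover not merely the constant in the lemma but a bound growing linearly in $n$, and under the weaker hypothesis $b>kc$ rather than $b/c>k^{2}/(k-1)$. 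This is sharper and more conceptual; the trade-off is that it leans on the exact martingale/regularity structure of the voter model (the identity $\E{N_t^2}-4=\frac{4}{nk}\sum_{s<t}(k-D^{(1)}_s)$ needs $\E{N_t}\equiv2$), whereas the paper's spectral bound is a direct estimate on the linear system and would survive small perturbations of the recursion more readily. Two minor points worth tightening in a write-up: Lemma~\ref{lem:uppr} is stated for $\epsilon<n^{-4}$, and its proof specializes without change to $\epsilon=0$, so your appeal to it for the pure voter model is legitimate; and the upper bounds $D^{(1)}_{T^*}\le k+o(1)$, $D^{(2)}_{T^*}\le k(k-1)+o(1)$ follow since the absorbing values are $\frac{2}{n}\cdot\frac{nk}{2}=k$ and $\frac{2}{n}\cdot\frac{nk(k-1)}{2}=k(k-1)$, with the non-absorption event contributing at most $n^{2}\,\P{}{\tau>T^{*}}=o(1)$.
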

The proof of this lemma is presented in Section \ref{sec:lowr}.

\section{Proof of Claim \ref{thm:clm}}
\label{sec:clm}
In  this section, we prove Claim \ref{thm:clm} by studying the expected drift in the number of cooperators when we apply the PD dynamics for the first time at $t^{*}$. Suppose node $i$ is chosen to update its action at time $t^{*}$. Substituting the payoff function \eqref{eq:payoff} into the
sampling probability \eqref{eq:WIS:copy}, we have:

\begin{align*}
& \P{}{\x{i}{t^{*}+1} = 1|~\mathcal{A}_{t^*}^{i}, \vec{X}_{t^*}, \{S_{t^* =1}\}} =\nonumber \\ & \quad \frac{1}{k} \sum_{j \sim i} \x{j}{t^*} \left[\frac{k-1}{k} \left(-kc  \x{j}{t^*} + b  \sum_{l \sim j} \x{l}{t^*}\right) +1 \right]
\\  & -  \frac{1}{k} \sum_{j \sim i}  \x{j}{t^*}\left[\frac{1}{k}\sum_{h \sim i, h \neq j} \left(-kc  \x{h}{t^*} + b  \sum_{g \sim h} \x{g}{t^*} \right)\right]
\end{align*}

{\noindent Simplifying the above expression results in:}

\begin{align*}
& \P{}{\x{i}{t^{*}+1} = 1|~\mathcal{A}_{t^*}^{i}, \vec{X}_{t^*}, \{S_{t^* =1}\}}  = \frac{1}{k} \sum_{j \sim i}  \x{j}{t^*}   \\
&+ \frac{k-1}{k^2} \sum_{j \sim i} \left[  -kc  \x{j}{t^*}  +  b  \sum_{l \sim j, l \neq i} \x{l}{t^*}\x{j}{t^*}   \right] \\
& - \frac{1}{k^2} \sum_{j \sim i}  \sum_{h \sim i, h \neq j} \left[ -kc  \x{h}{t^*} \x{j}{t^*} + b  \sum_{g \sim h, g \neq i} \x{g}{t^*}\x{j}{t^*} \right]
\end{align*}

{\noindent Note that the cooperation probability of node $i$ at time $t^*+1$ does not depend on its own action at time $t^{*}$. Further, it is decreasing in the number cooperator neighbors, but it is increasing in the number of {\em pairs} of neighbors playing $C$ and also in the number of
$C-C$ edges that are incident to neighbors of $i$.}


{\noindent Given $\vec{X}_{t^*}$, the expected drift in the number of cooperators, $\E{\Delta_{t^{*}}}$ in Figure \ref{fig:time}, is,
\begin{align}
& \E{\Delta_t^*| ~\vec{X}_{t^*}}  = \nonumber \\
& \frac{1}{nk}\sum_{i \in V} \left[ - (k-1)c\sum_{j \sim i} \x{j}{t^*} \right. \nonumber \\
& \quad \quad \quad \left.+ \frac{(k-1)b}{k} \sum_{j \sim i} \sum_{l \sim j, l \neq i}  \x{l}{t^*} \x{j}{t^*}  \right. \nonumber \\
& \quad \quad \quad \left.+ c\sum_{j \sim i}  \sum_{h \sim i, h \neq j} \x{h}{t^*} \x{j}{t^*}  \right. \nonumber \\
& \quad \quad \quad \left.- \frac{b}{k}  \sum_{j \sim i}  \sum_{h \sim i, h \neq j} \sum_{g \sim h, g \neq i} \x{g}{t^*} \x{j}{t^*}  \right]
\label{eq:Delta:sum:config}
\end{align}
{ \noindent Taking expectation over all configurations, $\vec{X}_{t^*} \in \{0,1\}^{n}$, we have:
\begin{align}
& \E{\Delta_{t^*}} =   \nonumber \\
&\frac{1}{nk} \sum_{i \in V} \left[- (k-1)c \sum_{j \sim i} \p{j}{t^*}+ \frac{(k-1)b}{k} \sum_{j \sim i} \sum_{l \sim j, l \neq i} \p{lj}{t^*}
\right. \nonumber \\
& \left.+ c \sum_{j \sim i}  \sum_{h \sim i, h \neq j} \p{hj}{t^*}- \frac{b}{k}  \sum_{j \sim i}  \sum_{h \sim i, h \neq j} \sum_{g \sim h, g \neq i} \p{gj}{t^*} \right]
\label{eq:Delta:sum}
\end{align}
where $\p{ij}{t^*} = \E{\x{i}{t^*} \x{j}{t^*}}$ is the pairwise correlation of $i$ and $j$ at time $t^*$, which is equal to the joint probability that nodes $i$ and $j$ play $C$ at time $t^*$. Since up until time $t^*$, the system evolves according to the VM dynamics, we analyze the drift $\E{\Delta_{t^*}}$ by studying the evolution of the pairwise correlations in the voter model.}

Before proceeding with the analysis, to simplify the representation, we introduce the {\em average correlations over all pairs} in graph $G$ and rewrite the drift $\E{\Delta_{t^*}}$ in terms of these average correlations. For a graph with girth at least $7$, and for $d \leq 4$, let $\q{(d)}{t}$ be the average joint probability of the two endpoints of any path of length $d$, i.e.,

{\noindent }
\begin{align}
\q{(d)}{t} = \frac{2}{nk(k-1)^{d-1}} \sum_{l \in L(d)} \p{u_lv_l}{t}, ~~~ 1 \leq d \leq 4
\label{eq:def:q}
\end{align}
where $L(d)$ is the set of all length $d$ paths, and $u_l$ and $v_l$ are the two endpoints of path $l$. Similarly, define
\begin{align}
\q{(0)}{t} = \frac{1}{n} \sum_{i \in V} \p{i}{t}.
\label{eq:def:q}
\end{align}
For a graph with girth at least $7$, between any pair of nodes that appear in the RHS of \eqref{eq:Delta:sum}, there is a unique path of length at most $3$. Thus, by rearranging the sums in \eqref{eq:Delta:sum} and using definition \eqref{eq:def:q}, we can write $\E{\Delta_{t^*}}$ in terms of the average correlations:

\begin{align}
& \E{\Delta_{t^*}} = \nonumber \\
& {(k-1)c} \left[-\q{(0)}{t^{*}} + \frac{(k-1)b}{kc} \q{(1)}{t^{*}} + \q{(2)}{t^{*}} - \frac{(k-1) b}{kc} \q{(3)}{t^{*}} \right].
\label{eq:Delta:Ave}
\end{align}

\subsection{Pairwise Correlations in the Voter Model}
\label{subsec:corr}
Consider the nodes $i$ and $j$. For $t \leq t^*$, we can write $\p{ij}{t}$ in terms of the pairwise joint probabilities of the previous step in the following way: at time $t-1$, if neither $i$ nor $j$ is chosen to update its action, then the joint probability of $i$ and $j$ does not change. On the other hand, if $i$ is chosen to update, then it will copy neighbor $h$ with probability $1/k$. Therefore, the joint probability of $i$ and $j$ at time $t$ will be the same as the joint probability of $h$ and $j$ in the previous step (time $t-1$). Similarly, if $j$ is chosen to update, we can write the joint probability of $i$ and $j$ in terms of the joint probability of $i$ and neighbors of $j$ at time $t-1$:


\begin{align}
\p{ij}{t}  & = (1 - \frac{2}{n}) \p{ij}{t-1}  + \frac{1}{nk} \sum_{h \sim i} \p{jh}{t-1} + \frac{1}{nk} \sum_{l \sim j} \p{il}{t-1}
,~t \leq t^* \label{eq:corr:recurs}
\end{align}


In a graph with girth not smaller than $7$, for each node $i$, the subgraph that consists of all the nodes at distance at most $3$ from $i$ is a $k$-regular tree. Further, for $1 \leq d \leq 3$, the average correlation $\q{(d)}{t}$ is exactly the probability that at time $t$, a randomly selected pair of nodes at distance $d$ paly $C$. Thus for $1 \leq d \leq 3$ and $t \leq t^*$, the evolution of $\q{(d)}{t}$ is similar to the evolution of the pairwise correlation in an infinite $k$-regular tree. In particular, using \eqref{eq:corr:recurs} and \eqref{eq:def:q}, for $1 \leq d \leq 3$ and $t \leq t^*$, we have:}

\begin{align}
\q{(d)}{t} = (1-\frac{2}{n}) \q{(d)}{t-1} + \frac{2}{nk} \q{(d-1)}{t-1} + \frac{2(k-1)}{nk} \q{(d+1)}{t-1},
\label{eq:corr:ave}
\end{align}

Since at time $0$, a random pair of neighbors play $C$, we have $\q{(0)}{0} = 2/n$, $\q{(1)}{0} = 2/(nk)$, and $\q{(d)}{0} = 0$, for $d \geq2 $. Starting from these natural initial conditions, a simple induction shows that:

\begin{align}
\q{(4)}{t}  \leq \q{(3)}{t} \leq \q{(2)}{t} \leq \q{(1)}{t}, ~~~~ t \leq t^*.
\label{eq:corr:monotone}
\end{align}


This monotonicity implies that $\E{\Delta_{t^*}}$ is non-decreasing in the ratio $b/c$. Thus for $b/c \geq k^2/(k-1)$, we have:
\begin{align*}
\E{\Delta_{t^*}} \geq {(k-1)c} \left[-\q{(0)}{t^{*}} + k \q{(1)}{t^{*}} + \q{(2)}{t^{*}} - k \q{(3)}{t^{*}} \right]
\end{align*}

%

{\noindent Using the recursive relation \eqref{eq:corr:ave}, we rewrite $\E{\Delta_{t^*}}$ as the following geometric sum:}
\begin{align}
& \E{\Delta_{t^*}} \geq \nonumber \\
& {(k-1)c} (1 - \frac{2}{n})^{t^*} \left[-\q{(0)}{0} + k \q{(1)}{0} + \q{(2)}{0} - k \q{(3)}{0} \right]  \nonumber \\
& + {(k-1)c}  \left[\frac{2}{n}  \sum_{s = 0}^{t^*-1} (1- \frac{2}{n})^{t^*-s-1} \left( \frac{1}{k} \q{(1)}{s} + (k-2) \q{(2)}{s} \right. \right. \nonumber \\
& \left. \left.  + \frac{k-1}{k} \q{(3)}{s} - (k-1) \q{(4)}{s}\right)\right]
\label{eq:Delta:corr}
\end{align}

{\noindent Having the initial conditions, it is easy to see that:
\[
-\q{0}{0} + k \q{(1)}{0} + \q{(2)}{0} - k \q{(3)}{0}  = 0,
\]

{\noindent Substituting the initial conditions into \eqref{eq:Delta:corr}, and using \eqref{eq:corr:monotone}, we have $\E{\Delta_{t^*}} \geq 0$.}


To complete the proof of Claim \ref{thm:clm}, we need to show that $\E{\Delta_{t^{*}}} \leq 1$. This follows directly form the definition of
$\E{\Delta_{t^*}}$: the number of cooperators cannot increase by more than one after one step. Therefore, the expected drift at each time step is at most one.

\section{Proof of Lemma \ref{lem:uppr}}
\label{sec:uppr}
First we study the convergence time of our Markov chain. Let $\theta$ be the convergence time, i.e., $\theta$ is the earliest time that the system reaches either the all one ($\vec{X} = \vec{1}$) or the all zero ($\vec{X} = \vec{0}$) state.
Note that our Markov chain is essentially a random walk on the cube $\{0,1\}^{n}$. In order to compute $\theta$ we can consider a
slightly different random walk in which there is non-zero transition probabilities form state $\vec{1}$ (and $\vec{0}$) to its neighbors in the cube. It is easy to see that starting from any state in $\{0,1\}^{n} \setminus \{\vec{0}, \vec{1}\}$, the time needed to hit $\vec{0}$ or $\vec{1}$ is the same in these two chains. However, the modified chain is irreducible which makes the analysis easier. Let $T_{\vec{X}}$ be time needed to hit $\vec{0}$ or $\vec{1}$ starting form $\vec{X}$ and let $T = \max_{\vec{x} \in \{0,1\}^{n} \setminus \{\vec{0}, \vec{1}\}}{T_{\vec{x}}}$. Clearly $\theta \leq T$.

Next, we proceed to compute $\E{T}$. We show that for $\epsilon < n^{-4}$, the hitting time of WIS, $\E{T} \leq (k+1) n^2/4$.
Note that for $\epsilon = 0$, our dynamics is the voter model. The hitting time of the voter model on a connected
$k$-regular graph is upper-bounded by $k n^2/4$ \cite[Chapter 14, Proposition 9]{aldous_fill}. In the following, we show
that for $\epsilon < n^{-4}$, the expected convergence time, $\E{T}$, is within an additive $O(1)$ factor of the hitting
time of the basic voter model, which implies that $\E{T} \leq (k+1) n^2/4$.

Define the mapping $\mathcal{M}: \{0,1\}^n \rightarrow \{0,1,\ldots,n\}$, where $\mathcal{M}(\vec{X}) = \sum_{i =1}^{n} X^i$. In words, we contract all the states with the same number of cooperators. Note that if we apply this mapping to the modified irreducible Markov chain, the resulting chain will be a {\em reversible death-birth chain}.
The following results for the reversible death-birth chain enable us to show that for $\epsilon$ sufficiently small, the hitting time of WIS is almost the same as the hitting time for the basics voter model.

\vspace{5mm}
\subsection{Review of Some Results for Reversible Death-Birth Chains:}
\label{subsec:DBChain}

\vspace{3mm}
{\noindent Let $\pi_i$, $0 \leq i \leq n$, be the stationary distribution of the death-birth chain, $p_{j-1,j}$ be the transition probability from state $j-1$ to state $j$, $q_{j,j-1}$  be the transition probability from state $j$ to state $j-1$, and $\mathbb{E}_{j-1}\left[T_j\right]$ be the expected time to hit state $j$, starting from state $j-1$.}

\begin{enumerate}
\item Using the reversibility of the chain, we can easily compute the stationary distribution $\pi_i$, $0 \leq i \leq n$ in terms of the
transition probabilities. In particular, given $\pi_{0}$,
\[
\pi_{i} = \pi_{0} \prod_{z = 0}^{j-1} \frac{p_{z,z+1}}{q_{z+1,z}},
\]
we can compute $\pi_0$ by setting $\sum_{i =0}^{n} \pi_{i} =1$.
\item Using the result of Palacios and Tetali \cite{Tetali}, one can explicitly compute all the hitting times of such a chain.
In particular, Theorem $2.3$ of \cite{Tetali} states that:
\begin{align}
\mathbb{E}_{j-1}\left[T_j\right] & = \frac{1}{\pi_{j} q_{j,j-1}} \sum_{z = 0}^{j-1} \pi_{z} \nonumber \\
\mathbb{E}_{j}\left[T_{j-1}\right] & = \frac{1}{\pi_{j} q_{j,j-1}} \sum_{z = j}^{n} \pi_{z}.
\label{eq:lem:tetali}
\end{align}
Since the graph is a line, for any two states $a$ and $b$, where $0 \leq a < b \leq n$:
\begin{align}
\mathbb{E}_{a}\left[T_b\right] = \sum_{j = a}^{b-1}  \mathbb{E}_{j}\left[T_{j+1}\right]
\label{eq:lem:line}
\end{align}
\end{enumerate}

Having the above results for the reversible death-birth chain, we show that for $\epsilon$ sufficiently small, the hitting time of WIS is almost the same as the hitting time of the basics voter model. The difference between each transition probability in the WIS and the voter model is at most $\epsilon$, therefore, the difference in stationary state probability of each state is at most $O(n^2 \epsilon)$.
From relations \eqref{eq:lem:tetali} and \eqref{eq:lem:line}, it is evident that the difference in the hitting times in WIS and the voter model is at most $O(n^4 \epsilon)$, which is $O(1)$ for $\epsilon < n^{-4}$.

{\noindent Finally, note that we have chosen $T^{*}$ such that $T^{*} \geq 2 n^{\gamma/3} \E{T}$. We upper-bound the probability of event $\overline{\mathcal{C}}_{T^{*}}$ in the following way. By Markov inequality,}
\[
\P{}{T \geq 2 \E{T}} \leq \frac{1}{2}
\]
Since the modified chain is irreducible, it satisfies the memoryless property and we have:
\[
\P{}{T \geq 4 \E{T} | \{T \geq 2 \E{T}\}} = \P{}{T \geq 2 \E{T}} \leq \frac{1}{2}
\]
Repeating this argument results in:
\begin{align*}
\P{}{T \geq 2i \E{T}} \leq \frac{1}{2^{i}}, ~~~ i \geq 1
\end{align*}
Thus for $T^{*} \geq 2 n^{\gamma/3}  \E{T}$, we have,
\[
\P{}{\overline{\mathcal{C}}_{T^{*}}} \leq \frac{1}{2^{n^{\gamma/3}}}
\]
which completes the proof of the lemma.



\section{Proof of Lemma \ref{lem:lowr}}
\label{sec:lowr}
The proof of Lemma \ref{lem:lowr} is mainly algebraic and it amounts for obtaining a lower bound for the solution of a linear dynamical system. We write \eqref{eq:Delta:Ave} in the following matrix form:
\begin{align*}
\E{\Delta_{t^*}} = (k-1) c \vec{Y}^{T} \vec{Q}_{t^{*}},
\end{align*}
where
\begin{align*}
\vec{Y} =
\begin{bmatrix}
-1 \\[0.3em]
\frac{(k-1) b}{kc } \\[0.3em]
1 \\[0.3em]
- \frac{(k-1) b}{kc }
\end{bmatrix}
~~~~~\tm{and} ~~~~
\vec{Q}_{t^*} =
\begin{bmatrix}
\q{(0)}{t^*} \\[0.3em]
\q{(1)}{t^*} \\[0.3em]
\q{(2)}{t^*} \\[0.3em]
\q{(3)}{t^*}
\end{bmatrix}
\end{align*}
Using the recursive relation \eqref{eq:corr:ave}, we have:
\begin{align}
& \E{\Delta_{t^*}} = (k-1) c \vec{Y}^{T} \vec{Q}_{t^{*}} = \nonumber \\
& (k-1) c \vec{Y}^{T} \left[ A^{t^*} \vec{Q}_{0} + \frac{2(k-1)}{nk} \sum_{s = 0}^{t^*-1} A^{t^*-s-1} \vec{R}_{s}\right],
\label{eq:lem:matrix}
\end{align}
where matrix $A$ is:
\begin{align*}
A =
\begin{bmatrix}
       1            & 0               & 0                & 0                 \\[0.3em]
       \frac{2}{nk} & 1 - \frac{2}{n} & \frac{2(k-1)}{nk}& 0                 \\[0.3em]
       0            & \frac{2}{nk}    & 1 - \frac{2}{n}  & \frac{2(k-1)}{nk} \\[0.3em]
       0            & 0               & \frac{2}{nk}     & 1 - \frac{2}{n}    \\[0.3em]
     \end{bmatrix}
\end{align*}
and vector $\vec{R}_{s}$ is
\begin{align*}
\vec{R}_{s} = \q{(4)}{s}
{\begin{bmatrix}
0 & 0 & 0 & 1
\end{bmatrix}}^{T}
\end{align*}
As mentioned before, for a graph with girth not smaller than $7$, $\q{(4)}{s} \leq \q{(3)}{s}$, for $s \leq t^*$. Using this inequality and some tedious algebra, we show that:
\begin{align*}
 \vec{Y}^{T} A^{t^*-s-1}
\vec{R}_{s}\geq
\vec{Y}^{T} A^{t^*-s -1}
\begin{bmatrix}
0 \\[0.3em]
0 \\[0.3em]
- \q{(3)}{s}  \sqrt{(k-1)/2} \\[0.3em]
\q{(3)}{s}
\end{bmatrix}
\end{align*}
Substituting this inequality in \eqref{eq:lem:matrix}, we have:
\begin{align}
&\E{\Delta_{t^*}} \nonumber \\
&\geq (k-1) c \vec{Y}^{T} \left[ A^{t^*} \vec{Q}_{0} + \frac{2(k-1)}{nk} \sum_{s = 0}^{t^*-1} A^{t^*-s-1} \vec{W}_{s}\right],
\label{eq:lem:matrix2}
\end{align}
where vector $\vec{W}_s$ is:
\begin{align*}
\vec{W}_s =
\q{(3)}{s}
\begin{bmatrix}
0 \\[0.3em]
0 \\[0.3em]
- \sqrt{(k-1)/2} \\[0.3em]
1
\end{bmatrix}
\end{align*}
We can  rewrite \eqref{eq:lem:matrix2} as:
\begin{align*}
\E{\Delta_{t^*}} \geq (k-1) c \vec{Y}^{T} B^{t^*} \vec{Q}_{0} ,
\end{align*}
where matrix $B$ is given by,
\begin{align*}
B =
\begin{bmatrix}
       1            & 0               & 0                & 0                 \\[0.3em]
       \frac{2}{nk} & 1 - \frac{2}{n} & \frac{2(k-1)}{nk}& 0                 \\[0.3em]
       0            & \frac{2}{nk}    & 1 - \frac{2}{n}  & \frac{2(k-1)}{nk} (1 - \sqrt{\frac{k-1}{2}}) \\[0.3em]
       0            & 0               & \frac{2}{nk}     & 1 - \frac{2}{nk}    \\[0.3em]
     \end{bmatrix}
\end{align*}
Summing over $0 \leq t^* \leq T^{*}-1$,
\begin{align*}
\sum_{t^* =0}^{T^{*}-1} \E{\Delta_{t^*}} \geq (k-1) c \vec{Y}^{T} \left[ \sum_{t^* =0}^{T^{*}-1}  B^{t^*} \right] \vec{Q}_{0} ,
\label{eq:lem:matrix3}
\end{align*}
The rest of the proof is algebraic; we compute the eigenvalue decomposition of matrix $B$ and based on that, we establish the lower bound on
$\sum_{t^* =0}^{T^{*}-1} \E{\Delta_{t^*}}$.

\section{WIS on Graphs with Small Girth}
\label{sec:example}
In this section, we present two simple examples, complete graph and complete bipartite graph, that show why
the expected number of cooperators does not
increase in graphs with small girth. Because of the symmetry, we can compute the exact drift in
these two examples. The main observation is that in these graphs, the local clustering of cooperators
does not occur. For instance, in the complete bipartite graph, the set of neighbors of all the
nodes in one side of the graph is always the same. Thus we have the free-riding effect that
a defector receives a higher payoff, which results in the decrease in the probability of
imitating cooperation.

\begin{proposition}
Suppose graph $G$ is the complete graph and suppose at time $0$, $x$ nodes play $C$ and the rest of the nodes play $D$; the system evolves according to the $WIS$ afterwards. For any $b,c \geq 0$ and $\epsilon \geq 0$, we have $\pi_{WIS} \leq 2/n$.
\end{proposition}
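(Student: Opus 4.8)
The plan is to show that on the complete graph the expected number of cooperators is a supermartingale under the WIS dynamics, so that it can only decrease in expectation; since it starts at $x$ and (as established right after the statement of Theorem~\ref{thm:main}) converges to $n\,\pi_{WIS}$, this forces $\pi_{WIS}\le x/n$, which is the claimed bound for the two-cooperator initial condition.

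The crux is the extreme symmetry of $K_n$ (where $k=n-1$). In any configuration with $m$ cooperators, every cooperator has exactly $m-1$ cooperator neighbours and every defector has exactly $m$, so by \eqref{eq:payoff} every cooperator has payoff $p_C=-kc+b(m-1)$ and every defector has payoff $p_D=bm=p_C+(b+kc)\ge p_C$, \emph{independently of the rest of the configuration}. Hence, when an updating node $i$ performs the biased PD sampling \eqref{eq:WIS:copy}, the weight it assigns to a cooperator neighbour $j$ equals $\tfrac1k\bigl(1+p_C-\overline U_i\bigr)$, where $\overline U_i=\tfrac1k\sum_{h\sim i}\U{h}{t}$ is a convex combination of $p_C$ and $p_D$ and is therefore $\ge p_C$; so this weight is at most $\tfrac1k$. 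Summing over the cooperator neighbours of $i$ gives
\[
\P{}{\x{i}{t+1}=1\mid \mathcal{A}_t^i,\vec X_t,\{S_t=1\}}\ \le\ \frac{\#\{\text{cooperator nbrs of }i\}}{k},
\]
whose right-hand side is precisely the probability of copying a cooperator under the VM step. Since WIS is the $\epsilon$-mixture of PD and VM, the same inequality holds unconditionally: $\P{}{\x{i}{t+1}=1\mid \mathcal{A}_t^i,\vec X_t}\le \#\{\text{cooperator nbrs of }i\}/k$.

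Then I would sum over the uniformly chosen updating node and invoke the zero-drift property of the voter model on regular graphs recalled in the Main Proof Ideas: writing $N_{t+1}=N_t-\x{i}{t}+\x{i}{t+1}$ and using $\sum_{i}\#\{\text{cooperator nbrs of }i\}=kN_t$, the pointwise bound above yields $\E{N_{t+1}\mid\vec X_t}\le N_t$. Thus $\{\E{N_t}\}_{t\ge0}$ is non-increasing, bounded below by $0$, and converges to $n\,\pi_{WIS}$, so $n\,\pi_{WIS}\le\E{N_0}=x$.

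The only thing to get right is the payoff comparison $p_D\ge p_C$, i.e. that in $K_n$ a defector is never worse off than a cooperator — the ``free-riding'' effect — since this is exactly what makes the pointwise domination of WIS by VM possible, and what fails on large-girth graphs where clustering can give cooperators the higher payoff. The boundary cases $m\in\{0,1,n-1,n\}$, where some of these inequalities become equalities or vacuous, cause no difficulty, and the standing assumption $k(b+c)<1$ is used only (via the paper's remark after \eqref{eq:WIS:copy}) to ensure the PD sampling weights lie in $[0,1]$.
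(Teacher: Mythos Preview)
Your argument is correct, and it takes a genuinely different route from the paper's. The paper plugs the complete-graph data directly into formula \eqref{eq:Delta:sum:config} and, after a counting exercise, obtains the explicit PD drift
\[
\E{\Delta_t\mid\vec X_t}=-\frac{n-2}{n(n-1)^2}\bigl[b+(n-1)c\bigr]\,y(n-y),
\]
which is manifestly nonpositive; since the VM contribution has zero drift, $\E{N_t}$ is nonincreasing. You instead exploit the fact that on $K_n$ every defector has payoff $p_D=p_C+(b+kc)\ge p_C$, so in the biased sampling \eqref{eq:WIS:copy} every cooperator neighbour receives weight at most $1/k$; this gives a \emph{pointwise} domination of the WIS cooperation probability by the VM one, and the supermartingale property follows from the zero-drift identity $\sum_i\#\{\text{coop.\ nbrs of }i\}=kN_t$.

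The paper's computation buys a quantitative expression for the drift (showing in particular that it is strictly negative for $0<y<n$ and $n\ge3$), while your comparison argument is shorter, sidesteps the counting in \eqref{eq:Delta:sum:config}, and makes the ``free-riding'' mechanism of Section~\ref{sec:example} completely transparent: the result needs only that, among the neighbours of any updating node, no cooperator out-earns any defector. Your observation that the argument actually yields $\pi_{WIS}\le x/n$ for a general initial count $x$ (with the stated $2/n$ corresponding to $x=2$) is also to the point.
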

\begin{proof}
Suppose at time $t$, the configuration $\vec{X}_{t}$ has $y$ cooperators. Using Relation \eqref{eq:Delta:sum:config}, a simple counting shows that,
\[
\E{\Delta_t | ~\vec{X}_{t}} =  -\frac{n-2}{n(n-1)^2} [b + (n-1) c] y(n - y)
 \]
that is non-positive for any $0 \leq y \leq n$. Thus $\E{N_{t}} \leq \E{N_{t-1}}$, for all $t \geq 0$, which immediately implies that $\pi_{WIS} \leq 2/n$.
\end{proof}

\begin{proposition}
Suppose graph $G$ is the complete bipartite graph and $n$ is even. Also, suppose at time $0$, $x$ nodes play $C$ and the rest of the nodes play $D$; the system evolves according to the $WIS$ afterwards. For any $b,c \geq 0$ and $\epsilon \geq 0$, we have $\pi_{WIS} \leq 2/n$
\end{proposition}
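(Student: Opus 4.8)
The plan is to follow the proof of the preceding (complete-graph) proposition verbatim in structure: evaluate the conditional expected drift $\E{\Delta_t \mid \vec{X}_t}$ from \eqref{eq:Delta:sum:config} on $K_{n/2,n/2}$, show it is non-positive for \emph{every} configuration, and conclude that $\E{N_t}$ is non-increasing, so that $\pi_{WIS}=\lim_t \E{N_t}/n$ cannot exceed its initial value.

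Let $A$ and $B$ be the two sides, so $k=n/2$, and describe a configuration $\vec{X}_t$ by the number $a$ of cooperators in $A$ and the number $\bar{a}$ of cooperators in $B$, both lying in $\{0,\dots,k\}$. The computation is driven entirely by the bipartite structure: for $i\in A$, every neighbor $j$ lies in $B$, and every second-neighbor $l\sim j$ with $l\ne i$ lies in $A\setminus\{i\}$. Reading off the four inner sums of \eqref{eq:Delta:sum:config} for such an $i$ by direct counting gives $\sum_{j\sim i}X_j=\bar{a}$, $\;\sum_{j\sim i}\sum_{l\sim j,\,l\ne i}X_lX_j=\bar{a}\,(a-X_i)$, $\;\sum_{j\sim i}\sum_{h\sim i,\,h\ne j}X_hX_j=\bar{a}\,(\bar{a}-1)$, and $\;\sum_{j\sim i}\sum_{h\sim i,\,h\ne j}\sum_{g\sim h,\,g\ne i}X_gX_j=(k-1)\,\bar{a}\,(a-X_i)$.

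The crux --- the formal version of the ``free-riding'' phenomenon --- is that the two $b$-terms cancel exactly, since $\frac{(k-1)b}{k}\,\bar{a}\,(a-X_i)-\frac{b}{k}\,(k-1)\,\bar{a}\,(a-X_i)=0$: on a complete bipartite graph the neighborhood of a neighbor $j$ of $i$ (other than $i$) is always the whole opposite side $A\setminus\{i\}$, so it is weighted identically in the ``pairs of neighbors of $j$'' sum and in the ``$C$--$C$ edges incident to neighbors of $i$'' sum. What survives is $-(k-1)c\,\bar{a}+c\,\bar{a}(\bar{a}-1)=c\,\bar{a}(\bar{a}-k)$ for each $i\in A$, and by symmetry $c\,a(a-k)$ for each $i\in B$. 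Summing over the $k$ nodes on each side and dividing by $nk$ yields
\begin{align*}
\E{\Delta_t \mid \vec{X}_t} \;=\; \frac{c}{n}\Big[\,a\,(a-k)+\bar{a}\,(\bar{a}-k)\,\Big]\;\le\;0,
\end{align*}
since $0\le a,\bar{a}\le k$ (with equality precisely when each side is monochromatic). Hence $\E{N_t}\le\E{N_{t-1}}$ for all $t$, which immediately implies $\pi_{WIS}\le 2/n$, exactly as in the complete-graph proof.

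I expect the only genuine work to be the bookkeeping in the second step --- keeping the exclusions $l\ne i$, $h\ne j$, $g\ne i$ straight --- and then spotting the $b$-term cancellation; once the drift formula above is in hand, non-positivity for all reachable configurations is automatic because it holds for arbitrary $(a,\bar{a})$, and the monotonicity of $\E{N_t}$ follows by taking expectations over $\vec{X}_t$, just as before.
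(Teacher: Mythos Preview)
Your proof is correct and takes exactly the paper's approach: evaluate the conditional PD-drift from \eqref{eq:Delta:sum:config} by direct counting on $K_{n/2,n/2}$, show it is $\le 0$ for every configuration $(a,\bar a)$, and deduce that $\E{N_t}$ is non-increasing. Your closed form $\tfrac{c}{n}\big[a(a-k)+\bar a(\bar a-k)\big]$ agrees (up to a harmless overall factor of $k=n/2$) with the paper's displayed expression, and your explicit observation that the two $b$-terms cancel is precisely the reason the paper's drift depends only on $c$.
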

\begin{proof}
Suppose at time $t$, the configuration $\vec{X}_{t}$ has $y_1$ and $y_2$ cooperators in the two sides of the graph. Using Relation \eqref{eq:Delta:sum:config}, a simple counting shows that,
\begin{align*}
\E{\Delta_t |~\vec{X}_{t}}  = &- \frac{c}{n} \left[(n/2 - y_1+y_2)y_1 (n/2 - y_2) \right. \\
& \quad \quad \quad  + \left. (n/2-y_2 + y_1) y_2 (n/2 - y_1) \right]
\end{align*}
that is non-positive for any  $0 \leq y_1,y_2 \leq n/2$. Thus $\E{N_{t}} \leq \E{N_{t-1}}$, for all $t \geq 0$, which immediately implies that $\pi_{WIS} \leq 2/n$.
\end{proof} 

\section{Discussion}
\label{sec:discuss}
\subsection{Relations to Evolutionary Dynamics of Nowak et al.}
\label{subsec:nowakdynamic}
As mentioned in Section \ref{sec:intro}, our WIS dynamics is closely related to a dynamics of Nowak et al. \cite{nowak_lieberman,nowak}.
In their work, they use the following updating rule: node $i$ updates its action to $C$ with a probability proportional to the fitness of its cooperator neighbors. Fitness of node $j$ is defined to be  $1- \epsilon + \epsilon \U{j}{t}$, where $\epsilon$ is a small positive number. In particular,
\begin{eqnarray}
\Pt{\x{i}{t+1} = 1|\mathcal{A}_{t}^{i}, \vec{X}_t} = \frac{\sum_{j \sim i}[(1-\epsilon)+ \epsilon \U{j}{t}] \x{j}{t} }{k(1-\epsilon) + \epsilon \sum_{j \sim i} \U{j}{t}}, \nonumber \\
\label{eq:nowak_dynamic}
\end{eqnarray}

This dynamics models the mechanism that node $i$ dies and its cooperator (defector) neighbors compete to replace $i$ with a cooperator (defector) player in proportion to their fitness. When $\epsilon \ll 1$, the effect of the payoffs is quite small and the process is called Weak Selection. They show that in the Weak Selection regime, if $b > kc$, the expected number of cooperators increases in the long run of the process. However, their analysis lacks rigor and ignores the effect of cycles and the correlation between nodes at distance more than one.

As it is evident from relation \eqref{eq:nowak_dynamic}, this is a nonlinear dynamics and its rigorous analysis is prohibitively difficult. However, note that algebraic manipulation results in having:

\begin{eqnarray*}
\Pt{\x{i}{t+1} = 1|\mathcal{A}_{t}^{i}, \vec{X}_t} = \frac{1}{k} \sum_{j \sim i}  \x{j}{t}  \left[ 1 - \epsilon \right. \\
\left. + \epsilon \left(\U{j}{t}+1 - \frac{1}{k}\sum_{h \sim i} \U{h}{t}\right)\right] + O(\epsilon^2/k)
\end{eqnarray*}
which is the same as the transition kernel of the WIS process (relation \eqref{eq:updat_WIS}) up to an $O(\epsilon^2/k)$ factor.

\subsection{Constant $f$ in Theorem \ref{thm:main}}

We compute the constant $f$ by putting Lemma \ref{lem:uppr} and \ref{lem:lowr} together:
\begin{align*}
f & =  \frac{k (k-1)^3 \sqrt{(k-1)/2}}{k^2(k-1) \sqrt{(k-1)/2} + k} (b - \frac{k^2}{k-1} c) \\ & - 3/4(k+1)^2 n^{-\gamma/3} - \frac{n^{5 + \gamma}}{2^{n^{\gamma/3}}}.
\end{align*}

\vfill\eject
\bibliographystyle{abbrv}
\bibliography{references}

\end{document}